\newtheorem{Thm}{Theorem}[section]
\newtheorem{theorem}[Thm]{Theorem}
\newtheorem{lemma}[Thm]{Lemma}
\newtheorem{remark}{Remark}[section]
\newenvironment{proof}[1][Proof]{\begin{trivlist}
\item[\hskip \labelsep {\bfseries #1}]}{\end{trivlist}}
\newcommand{\qed}{\nobreak \ifvmode \relax \else
      \ifdim\lastskip<1.5em \hskip-\lastskip
      \hskip1.5em plus0em minus0.5em \fi \nobreak
      \vrule height0.75em width0.5em depth0.25em\fi}
\def\unit{\bbbone}
\def\bbbone{{\mathchoice {\rm 1\mskip-4mu l} {\rm 1\mskip-4mu l}
{\rm 1\mskip-4.5mu l} {\rm 1\mskip-5mu l}}}
\begin{document}

\title{On the mixing property for a class of states
of relativistic quantum fields}

\pacs{81T08, 82B21, 82B31, 46L55} 
\keywords{Quantum Dynamical Systems.}

\author{Christian D.\ J\"akel}
\email{christian.jaekel@mac.com}
\affiliation{School of  Mathematics, Cardiff University, Wales, \\
CF24 4AG, United Kingdom.}

\author{Heide Narnhofer}%
 \email{heide.narnhofer@univie.ac.at}
\affiliation{ Mathematical Physics,  
Universit\"at Wien, \\
Boltzmanngasse 5,
1090 Vienna, Austria.
}%

\author{Walter F. Wreszinski}
 \email{wreszins@gmail.com, supported in part by CNPq}
\affiliation{%
Departamento de F\'isica Matem\'atica,  
Instituto de F\'isica, \\
USP, Caixa Postal 66318  
05314-970, S\~ao Paulo, Brazil.}

\begin{abstract}
Let $\omega$ be a factor state on the quasi-local algebra $\cal{A}$ of observables generated by a relativistic quantum field, which in addition satisfies
certain  regularity conditions (satisfied by ground states and the recently constructed thermal states of the $P(\phi)_2$ theory). We prove that there exist space and time translation invariant states, some of which are arbitrarily close to $\omega$ in the weak* topology, for which the time evolution is weakly asymptotically abelian. 
\end{abstract}

\maketitle

\section{Introduction and Summary}
\label{Sec1}

Let $({\cal A},\tau)$ be a $C^*$- or $W^*$-dynamical system (see, e.g.~\cite{1,2}), where ${\cal A}$ is a quasi-local algebra \cite[Vol.~1, Sec.~2.6]{1} and~$\tau$~is a group of time-translation automorphisms of ${\cal A}$. Let $\omega$ be
a $\tau$-invariant state, assumed to be normal in the $W^*$-case.
The GNS  triple $({\cal H}_{\omega},\pi_{\omega}, \Omega_{\omega})$ associated to the pair
$({\cal A}, \omega)$ consists \cite[Theorem 2.3.16, Vol.~1]{1} of a (separable) Hilbert space~${\cal H}_{\omega}$,
a representation~$\pi_{\omega}$ of ${\cal A}$ on  ${\cal H}_{\omega}$, and a
vector~$\Omega_{\omega}$, which is cyclic for~$\pi_{\omega}({\cal A})$.
The representation $\pi_{\omega}$
maps the triple~$({\cal A},\tau,\omega)$ into a new triple
$({\cal R}_{\omega},\tilde{\tau},\tilde{\omega})$, a $W^*$-dynamical system on the enveloping von Neumann algebra
${\cal R}_{\omega}=\pi_{\omega}({\cal A})''$, with a normal invariant state
\begin{equation}
\tilde{\omega}(A)=(\Omega_\omega, A\Omega_\omega),
\qquad
A \in  {\cal R}_{\omega}\, .
\label{1}
\end{equation}
$(\, . \, ,\, . \, )$ denotes the scalar product in ${\cal H}_{\omega}$.
Since $\omega$ is $\tau$-invariant, the $W^*$-dynamics~$\tilde{\tau}$,
\begin{equation}
\tilde{\tau}_t(A)=U_{\omega}^{t}A(U_{\omega}^{t})^{*} ,
\label{2}
\end{equation}
is implemented by a one-parameter group $\{ U_{\omega}^{t} \mid t \in \mathbb{R} \}$
of unitary operators
\begin{equation}
U_{\omega}^{t}=\exp(itL_{\omega})
\label{3}
\end{equation}
acting on ${\cal H}_{\omega}$.
The self-adjoint operator $L_{\omega}$, known as the $\omega$-Liouvillean \cite{2}, has the property
\begin{equation}
L_{\omega}\Omega_{\omega}=0 \, .
\label{4}
\end{equation}
Quantum versions of the ergodic theorems of classical dynamics \cite{3} have been formulated and proved \cite{2, 4}. The natural quantum counterpart of the definition of ergodicity in classical mechanics is the notion of pure state \cite[Vol.~1, pg.~52, Def.~2.3.14]{1} or primary or factor state \cite[Vol.~1, pg.~81]{1}. The Koopman-von Neumann spectral characterisation of ergodicity \cite{3} has the following quantum analogue: $\omega$ is ergodic iff  $\hbox{Ker } L_{\omega}$ is one-dimensional, or, equivalently,
\begin{eqnarray}
&\lim_{T\to\infty}\frac{1}{T}\int_0^T\, {\rm d}t \; [\omega(A\tau_t(B))  -\omega(A)\omega(B)]=0
\qquad
\label{5}
\end{eqnarray}
for all $A,B\in{\cal A} $.
Again in close analogy to classical dynamics \cite{3}, a $C^*$-dynamical system $({\cal A},\tau,\omega)$ is
said to be mixing iff (see, e.g.,~\cite{2}):
\begin{equation}
\lim_{t\to\infty}\omega(A\tau_t(B))-\omega(A)\omega(B)=0
\qquad
\forall A,B\in{\cal A} \, .
\label{6}
\end{equation}

In classical mechanics, it is well-known that the mixing property (\ref{6}) has a much more dramatic effect than ergodicity~(\ref{5}): it represents the first step in a ergodic hierarchy crowned by Bernoulli or K-systems, which display fully chaotic behaviour \cite{3}. The quantum theory of the latter has been developed in Ref.~\cite{5}.
A necessary and sufficient condition for mixing is~\cite{2}:
\begin{equation}
w-\lim_{t\to\infty}\exp(itL_{\omega})=\Omega_{\omega}(\Omega_{\omega},\, . \, ) \, .
\label{7}
\end{equation}
A sufficient condition for mixing, which follows from the Riemann-Lebesgue lemma, is (see, again, \cite{2}):

\begin{lemma} \label{8}
If the spectrum of $L_{\omega} $ on $ \Omega_{\omega}^{\bot}$ 
is purely absolutely continuous, then $({\cal A},\tau,\omega) $ is mixing. 
\end{lemma}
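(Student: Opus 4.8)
The plan is to deduce the necessary and sufficient condition~(\ref{7}) for mixing from the spectral hypothesis, with the Riemann--Lebesgue lemma doing the analytic work. Observe first that, since $\omega$ is a state, $\|\Omega_\omega\|^2=\tilde\omega(\unit)=1$, so the operator on the right-hand side of~(\ref{7}), namely $P:=\Omega_\omega(\Omega_\omega,\,\cdot\,)$, is the rank-one orthogonal projection onto $\mathbb{C}\,\Omega_\omega$, while $\unit-P$ projects onto $\Omega_\omega^\bot$. Condition~(\ref{7}) thus asserts exactly that $U_\omega^{t}=\exp(itL_\omega)$ converges weakly to $P$ as $t\to\infty$. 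Two structural facts are available: by~(\ref{4}) one has $L_\omega\Omega_\omega=0$, whence $U_\omega^{t}\Omega_\omega=\Omega_\omega$ and $U_\omega^{t}$ commutes with $P$; consequently $\Omega_\omega^\bot$ reduces $L_\omega$, and the restriction $L_\omega|_{\Omega_\omega^\bot}$ is the self-adjoint operator whose spectrum the hypothesis controls.

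Next I would reduce the weak limit to a single family of matrix elements. Writing an arbitrary pair $\psi,\phi\in{\cal H}_\omega$ as $\psi=P\psi+\psi'$ and $\phi=P\phi+\phi'$ with $\psi',\phi'\in\Omega_\omega^\bot$, and using both $U_\omega^{t}\Omega_\omega=\Omega_\omega$ and the invariance of $\Omega_\omega^\bot$ under $U_\omega^{t}$, all cross terms vanish and one obtains
\[
(\psi,U_\omega^{t}\phi)=(\psi,\Omega_\omega)(\Omega_\omega,\phi)+(\psi',U_\omega^{t}\phi').
\]
Hence~(\ref{7}) is equivalent to the statement that $(\psi',U_\omega^{t}\phi')\to0$ as $t\to\infty$ for all $\psi',\phi'\in\Omega_\omega^\bot$.

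To conclude, let $E(\,\cdot\,)$ be the spectral resolution of $L_\omega|_{\Omega_\omega^\bot}$. The spectral theorem gives
\[
(\psi',U_\omega^{t}\phi')=\int_{\mathbb{R}}e^{it\lambda}\,{\rm d}\mu_{\psi',\phi'}(\lambda),
\qquad
\mu_{\psi',\phi'}(\Delta):=(\psi',E(\Delta)\phi'),
\]
a complex measure of total variation at most $\|\psi'\|\,\|\phi'\|$. By hypothesis $L_\omega|_{\Omega_\omega^\bot}$ has purely absolutely continuous spectrum, so every diagonal measure $\mu_{\chi,\chi}(\Delta)=\|E(\Delta)\chi\|^2$ is absolutely continuous with respect to Lebesgue measure; since $\mu_{\psi',\phi'}$ is a finite complex linear combination of such diagonal measures by polarization, it too is absolutely continuous, with Radon--Nikodym density $f\in L^1(\mathbb{R})$. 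The displayed integral is then, up to the Fourier-transform convention, $\hat f(t)$, and the Riemann--Lebesgue lemma forces it to $0$ as $|t|\to\infty$. This establishes~(\ref{7}) and therefore mixing.

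I expect the argument to be essentially routine once these tools are in place; the only point demanding care is the passage from the hypothesis---phrased as a property of the \emph{spectral type} of $L_\omega|_{\Omega_\omega^\bot}$---to an honest $L^1$ density for each off-diagonal matrix element. This is precisely where absolute continuity of the diagonal measures together with polarization is needed, after which Riemann--Lebesgue applies verbatim.
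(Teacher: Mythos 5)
Your proof is correct and follows exactly the route the paper intends: the paper gives no detailed argument, merely noting that the lemma ``follows from the Riemann--Lebesgue lemma'' and citing \cite{2}, with the mixing criterion~(\ref{7}) as the bridge. Your reduction of~(\ref{7}) to matrix elements on $\Omega_{\omega}^{\bot}$ (using $L_{\omega}\Omega_{\omega}=0$ to kill the cross terms), followed by polarization of the spectral measures and Riemann--Lebesgue, is precisely the standard filling-in of that sketch.
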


The fact that the condition stated in Lemma \ref{8} is not necessary is due to the existence of singular continuous measures whose Fourier transform decays at infinity---the so-called Rajchman measures \cite{6}.
In spite of the great conceptual and practical importance of the mixing condition (\ref{6}), it has been seldom studied in quantum field theory. One exception is \cite{7}, the other is \cite{8} (see also \cite{9, 10, 11}). Here one must distinguish the vacuum state, for which we need only consider a $C^*$-dynamical system and~$L_{\omega}$ should be identified with the physical Hamiltonian~$H_{\omega}$, and thermal states, which satisfy the KMS condition.
%

In a beautiful paper, Maison \cite{7} proved that, in a unitary representation of the
Poincar\'e group~$P^\uparrow_+$, the infinitesimal space-time translations have a spectral measure without singular continuous part. By Lemma \ref{8}, this implies (\ref{6}) for the ground state, under the assumption that the latter is invariant under the group of Poincar\'e automorphisms: this yields a unitary representation of the Poincar\'e group by a well-known argument, already used to establish~(\ref{2}) (see, e.g.~\cite[Vol.~1, Corollary 2.3.17, pg.~56]{1}).

In recent years there arose a special interest in thermal quantum field theory \cite{12}, which is expected to play an important role in cosmology (see the concluding remarks of Section~\ref{Sec4}). Thermal states of quantum fields are not, however, invariant under Lorentz boosts, because the KMS condition distinguishes a rest frame (see also the discussion in \cite{12,13}).

In this paper, we generalise the theorem of \cite{7} to a class which includes thermal quantum fields. The method is entirely different from Maison's, which is based on the structure of the irreducible representations of the Poincar\'e group: it consists of an extension of the arguments introduced in \cite{8,10,11}, and applied there to certain Galilei invariant theories. The basic idea of \cite{8} was to exploit that
\begin{itemize}
\item[(a)] the boost relates space-translations and time-translations;
\item[(b)] for space-translations, the large-distance behaviour is under control for the class of models considered.
\end{itemize}
Two new elements of the present extension are:
\begin{itemize}
\item[(c)] the introduction of a time-dependent scale in the boosts;
\item[(d)] the explicit use of local commutativity.
\end{itemize}
In Section~\ref{Sec2} we present our framework, consisting of Assumptions {\bf A1-A5}. The known examples included in this framework are also briefly reviewed there. In Section~\ref{Sec3}, we prove~(\ref{6}) for a dense set (in the weak* topology) of time- and space-translation invariant states of a (relativistic) quantum field theory satisfying the assumptions of Section~\ref{Sec2} (Theorem~1, Theorem~2 and Theorem~3).
%
%
Section~\ref{Sec4} is reserved to the conclusion, open problems and conjectures.

\section{The framework: assumptions and examples}
\label{Sec2}

We denote by $(x^{\mu})$, $\mu=0,1,\ldots,\nu$, the points of Minkowski space-time
${\mathbb R}^{1+\nu}$. Thus $\nu$ is the space dimension and $x^{0}=t$ ($c=1$) denotes the time-variable. The transformations~$T(a)$ and~$L(v)$ of ${\mathbb R}^{1+\nu}$, corresponding to space-time translation by $a \in {\mathbb R}^{1+\nu}$ and velocity boost by $u \in(-1,1)$ 
along \cite{37} the $x^1$-axis, are defined, respectively, by
\begin{equation}
T(a)x=x+a
\label{9a}
\end{equation}
and
\begin{equation}
L(v)x=
\begin{pmatrix}
\, x^{0} \cosh v- x^{1}\sinh v\, \\
x^{1}\cosh v- x^{0}\sinh v\\
 x^{2}\\
 \vdots\\
 x^{\nu} 
\end{pmatrix}
\label{9b}
\end{equation}
where $u = \tanh v$ and
$\cosh v =(1-u^{2})^{-1/2}$. The corresponding automorphisms of ${\cal A}$, denoted  by
$\xi(a)\equiv(\tau_{a^0},\sigma_{\vec{a}})$  and $\lambda_v$,  satisfy the relations
\begin{equation}
\xi_{ (t,\vec{x}) } =\tau_{t} \circ \sigma_{\vec{x}}=\sigma_{\vec{x}} \circ \tau_{t}
\qquad
\forall t\in {\mathbb R},
\quad
\forall \vec{x}\in {\mathbb R}^{\nu},
\label{10}
\end{equation}
and
\begin{equation}
\lambda_{v} \circ \xi_{a} \circ \lambda_{-v}=\xi_{ L(v)a }
\qquad
\forall a\in {\mathbb R}^{1+\nu},
\quad
\forall v\in {\mathbb R} \, .
\label{11}
\end{equation}
We shall assume that we are given a relativistic quantum field theory described in terms of a quasi-local algebra ${\cal A}$ satisfying
\begin{itemize}
\item[{\bf A1}] the Haag-Kastler axioms \cite{14};
\item[{\bf A2}] for all $A \in {\cal A}$, $\lim_{t\to0} \Vert \tau_{t}(A)-A \Vert =0$ and $\lim_{\vec{x}\to\vec{0}} \Vert \sigma_{\vec{x}}(A)-A \Vert =0$;
\item[{\bf A3}] for all $A \in {\cal A}$, $\lim_{\vec{v}\to\vec{0}} \Vert \lambda_{\vec{v}}(A)-A \Vert =0$;
\end{itemize}
together with a state $\omega$ defined on ${\cal A}$, which is

\begin{itemize}
\item[{\bf A4}] either a pure state or a factor state;
\item[{\bf A5}] invariant under the automorphism group of space-time translations~$\{ \xi(a) \mid a\in {\mathbb R}^{1+\nu} \}$ and
extremal space translation invariant;
\end{itemize}
In the relativistic case space translations are asymptotically abelian in norm:
\begin{equation}
\label{H28}
\lim_{|\vec{x}|\to\infty} \Vert \lbrack A,\sigma_{\vec{x}}(B) \rbrack \Vert = 0.
\end{equation}
As $\omega $ is, according to {\bf A4-A5}, an extremal space translation invariant factor state, it is
clustering (\cite[Example 4.3.24]{1}):
\begin{equation}
\label{H29}
\omega \bigl(A\sigma_{\vec{x}}(B)\bigr) - \omega (A)\omega (B) \to 0 \quad
\mbox{ as } \quad |\vec{x}|\to\infty.
\end{equation}

Let ${\cal O} \to {\cal A}({\cal O})$ be the net of local algebras in \cite{14}, denote the representation obtained by the GNS construction from the state $\omega$ by $\pi_{\omega}$, and consider the von Neumann rings
\begin{equation}
{\cal R}({\cal O})=\pi_{\omega}({\cal A}({\cal O}))^{''}.
\end{equation}
By \cite[pp.~129--132]{19}, we may choose for the algebra of local observables in ${\cal O} \subset {\mathbb R}^{1+\nu}$ a $C^*$-algebra ${\cal A}_S({\cal O}) \subset {\cal R} ({\cal O})$ such that for all $A \in {\cal A}_S({\cal O})$ the assumption {\bf A2} holds.
Note that {\bf A3} was not a part of Assumption 3.1.2 of \cite{19}, but may be included by an extension of the argument, using the following result of Sakai \cite{20}:  let $\nu$ be the left invariant Haar measure on the orthochronous Poincar\'e group $P^\uparrow_+$  and let
$L^1 (P^\uparrow_+ , \nu)$ be the group algebra of $P^\uparrow_+$. For $f \in L^1 (P^\uparrow_+ , \nu)$ and
$\{ \alpha_g (A)  \mid g \in {\rm supp} \,  f \} \subset {\cal R} ({\cal O}) $,
put
\begin{equation}
\label{sakai-2}
T_f (A) = \int_{P^\uparrow_+} {\rm d} \nu (g) \;  f(g) \alpha_g (A)  ,
\end{equation}
where the integral is defined by using the $\sigma$-weak topology on ${\cal R} ({\cal O})$; then one can easily see that
$T_f (A) \in {\cal R} ({\cal O}) $.
Note that $\alpha_g \bigl(T_f (A) \bigr)$ lies in a larger (but nevertheless strictly) local algebra
${\cal R} (\widehat{\cal O})$ and therefore is well-define. Moreover,
\begin{equation}
\lim_{ P^\uparrow_+ \ni g  \to e} \left\| \alpha_g \bigl(T_f (A) \bigr)  - T_f (A) \right\| =0 \,.
\end{equation}
Thus  $T_f (A) $ as defined in (\ref{sakai-2}) is a smooth element with respect to the Poincar\'e group
automorphisms. Next let us consider $f_1, f_2 \in L^1 (P^\uparrow_+, \nu)$ and
\begin{equation}
\label{sakai-1}
\{ \alpha_g (A_i)  \mid g \in {\rm supp} \,  f_i \} \subset {\cal R} ({\cal O}) , \qquad i=1,2.
\end{equation}
Then (see \cite{20})
\begin{eqnarray}
\left\| \alpha_g \bigl(T_{f_1} (A_1)T_{f_2} (A_2) \bigr)  - T_{f_1} (A_1)T_{f_2} (A_2)  \right\|
\to 0,  
\end{eqnarray}
as $P^\uparrow_+ \ni g \to e$. Thus elements of the form (\ref{sakai-2}) generate a
$*$-subalgebra ${\cal A}_0({\cal O})$  of ${\cal R} ({\cal O})$.
Then by the above consideration,
\begin{equation}
\| \alpha_g (B) - B \| \to 0,  \qquad P^\uparrow_+ \ni g \to e,
\end{equation}
for $B \in {\cal A}_0({\cal O})$. Let ${\cal A}_S({\cal O})$ be the $C^*$-norm closure of~${\cal A}_0({\cal O})$. 
It is easily seen that ${\cal A}_S({\cal O})$ is $\sigma$-weakly dense in ${\cal R}({\cal O})$.
Moreover, it consists of smooth elements: for $C \in {\cal A}({\cal O})$, $ \epsilon >0$,
let $B \in {\cal A}_0({\cal O})$ be an element such that $\| C - B \| \le \epsilon$; then
\begin{eqnarray}
\| \alpha_g (C) - C\| \; \le  \;  & \; \| \alpha_g (C) - \alpha_g (B)\| \nonumber \\ [3mm]
& + \| \alpha_g (B) - B\| \nonumber \\ [3mm]
& + \| B - C\|.
\end{eqnarray}
Hence $\overline{\lim}_{g \to e} \| \alpha_g (C) - C\| \le 2 \epsilon$. Since $\epsilon$ is arbitrary, $\lim_{g \to e}
\| \alpha_g (C) - C \| = 0$.

Therefore the quasi-local algebra defined as
\begin{equation}
{\cal A}=\overline{\bigcup_{{\cal O}\subset {\mathbb R}^{1+\nu}}{\cal A}_S({\cal O})}  \; ,
\label{13}
\end{equation}
where the bar denotes the $C^*$-inductive limit \cite[Proposition 11.4.1]{21}, together with
the automorphisms $\{ \alpha_g \in Aut ({\cal A}) \mid g \in P^\uparrow_+  \}$
forms a $C^*$-dynamical system $({\cal A} , P^\uparrow_+ ,  \alpha_g )$.
Note that the Weyl algebra of the canonical commutation relations (CCR) does not satisfy Assumptions~{\bf A2-A3} --- for the violation of {\bf A2}, see
\cite[Vol.~2, Theorem~5.2.8]{1}.

Important examples included in the above framework are the renormalised vacuum state of the
$P(\phi)_2$ theory \cite{22,23,24,25} and the temperature states of the same theory \cite{26,27}, as we now explain. In order to do that, we need a brief exposition of the barest elements of the theory.

Let ${\mathfrak F} := \bigoplus_{n=0}^{\infty}\bigotimes_{s}^{n}({\cal H})$ be the bosonic Fock space (the subscript $s$ indicates the {\em symmetric} tensor product of copies of ${\cal H}$) over the one-particle space
${\cal H}$ given by $L^{2}({\mathbb R}, {\rm d}k)$; as usual $\bigotimes_{s}^{0}{\cal H}:={\mathbb C}$.
By $\Omega\equiv (1 , 0, \ldots) \in {\mathfrak F}$
we denote the vacuum vector. The free Hamiltonian
\begin{equation}
H_{0}:=d\Gamma(\omega)
\end{equation}
is the second quantisation of the one-particle energy
$\omega(k):= \sqrt{k^{2}+m^{2}}$ with $k\in {\mathbb R}$ and mass~$m>0$, considered as a multiplication operator on ${\cal H}$. The number operator on the Fock space is $N:=d\Gamma( \unit )$. There is a representation of the CCR by creation and annihilation operators
$a^{*}(f)$ and~$a(f)$, $f\in{\cal H}$ (see, e.g., \cite[Section~3.2]{25}). Understanding these objects as operator-valued distributions and writing symbolically
\begin{equation}
a(h) = \int  {\rm d}k \; \overline{h (k)} a(k)  ,
\;  \; a^{*}(h) = \int  {\rm d}k \; h(k)a^{*}(k)  ,
\end{equation}
the free field is given by
\begin{equation}
\phi(x) = \int \frac{{\rm d}k}{\omega(k)^{1/2}} \; {\rm e}^{-ikx} \bigl(a^{*}(k) + a(-k)
\bigr)\, .
\label{14a}
\end{equation}
This expression is again considered as an operator-valued distribution and as such the multiplication of these objects at the same point $x$ is not a well defined operation. We define powers of the fields by ``point splitting'', e.g.,
\begin{equation}
{:}\phi^{2}(x){:} \; \equiv \lim_{y\to x} \; [ \phi(y)\phi(x) - (\Omega,\phi(y)\phi(x)\Omega) ] \; ,
\end{equation}
and similarly for higher powers. Using the CCR, one sees that this leads to the prescription of Wick ordering: all creation operators stand to the left of all annihilation operators.

We fix a real, semi-bounded polynomial of degree $2n$
\begin{equation}
P(\lambda) = \sum_{j=0}^{2n} a_{j}\, \lambda^{j}
\label{14b}
\end{equation}
and choose a function $g\in C_{0}^{\infty}({\mathbb R})$ with $0 \le g(x) \le1 $. Define the interaction Hamiltonian localised in a compact space region by
\begin{equation}
V(g) = \int {\rm d}x \; g(x) \,  {:}P(\phi(x)){:} \, .
\label{14c}
\end{equation}
The Wick ordering in the powers of field operators makes~$V(g)$ a well-defined unbounded quadratic form. By investigating the smoothness and symmetry properties of the scalar kernel of $V(g)$, it is seen that $V(g)$ is an unbounded, symmetric operator with domain contained in $D(N^{n})$. By \cite[Theorem 6.4]{25}
\begin{equation}
H(g) = H_{0} + V(g)
\label{14d}
\end{equation}
is essentially self-adjoint on $D(H_{0})\cap D(V(g))$. Moreover, $H(g)$ is semibounded from below. Let
\begin{equation}
E_{g} := \mbox{ inf } \lbrace\mbox{Spectrum } H(g)\rbrace \, .
\end{equation}
The corresponding eigenvalue is an isolated eigenvalue of~$H(g)$ with multiplicity one \cite{23}, thus 
corresponding to an eigenvector $\Omega_{g}\in {\mathfrak F}$, $\Vert \Omega_{g} \Vert = 1$, such that
$H(g)\Omega_{g} = E_{g} \Omega_{g}$. As~$g(\cdot)\to 1$ the scalar product $(\Psi, \Omega_{g}) \to 0$ for all
$ \Psi \in {\mathfrak F}$. This clearly demonstrates that Hilbert space methods are insufficient,
and thus justifies the operator algebraic framework, which allows us to
obtain (see \cite{23}) the vacuum state $\omega$ as the $w^*$-limit, as $g(\cdot)\to 1$, of the states
\begin{equation}
\omega_{g}(A) = (\Omega_g, A \Omega_g) \, ,
\label{15}
\end{equation}
with $A$ in the $C^*$-closure ${\cal A}$ of the local von Neumann algebras ${\cal R}({\cal O})$ generated
by the Weyl operators
\begin{equation}
W(f)=\exp \left( i \int {\rm d}x \, f(x) \phi(x) \right),
\end{equation}
with
\begin{equation}
f \in \overline
{\omega^{1/2} {\cal D}_{\mathbb R}({\cal O}) + i \omega^{-1/2}
{\cal D}_{\mathbb R}({\cal O})}^{L^2 ( \mathbb R, {\rm d}x)}.
\end{equation}
For the thermal field theory, the free Liouvillean $L_{\omega}$ (see~(\ref{3})) is the Araki-Woods
Liouvillean~$L_{\scriptscriptstyle AW}$ (see \cite{12} and \cite{26}).  Euclidean techniques can be used to define
the operator sum
\begin{equation}
H_{\beta} (g) := L_{\scriptscriptstyle AW} + \int {\rm d}x \; g(x) : P(\phi_{\beta}(x)):_{C_{\beta}} \,  ,
\label{16a}
\end{equation}
where the Wick ordering is defined from (\ref{14b}) in terms of the thermal covariance 
function~$C_{\beta}$~\cite{12,26}, as an essentially self-adjoint operator \cite{27}: let its closure be defined by the same symbol.
The vector
\begin{equation}
\Omega_{\beta} (g) := \frac{ {\rm e}^{- \frac{\beta}{2} H_{\beta}(g)} \Omega_{\scriptscriptstyle AW}}
{\Vert {\rm e}^{-\frac{\beta}{2} H_{\beta}(g)} \Omega_{\scriptscriptstyle AW} \Vert} ,
\label{16b}
\end{equation}
where $\Omega_{\scriptscriptstyle AW}$ is the cyclic GNS vector associated to the Araki-Woods state, induces a KMS state $\omega_{\beta}(g)$ for the $W^{*}$-dynamical system $(\pi_{\scriptscriptstyle AW}({\cal A})'', \tau^{g})$, where $\pi_{\scriptscriptstyle AW}$ is the Araki-Woods representation and $\tau_{t}^{g}$ is the time evolution
\begin{equation}
\tau_{t}^{g}(A):=  {\rm e}^{itH_{g}} A {\rm e}^{-itH_{g}} \, , \qquad A \in {\cal A} \, .
\label{16c}
\end{equation}
Note that $H_{g}$ and $ H_{\beta}(g)$ induce the same group of automorphisms
on ${\cal A}$, thus there is no $\beta$ dependence on the level of automorphisms.
It was proved in \cite{18} that the limit
\begin{equation}
\omega_{\beta} := \lim_{g(.) \to1} \omega_{\beta}(g)
\label{16d}
\end{equation}
exists and defines a state on
\begin{equation}
{\cal A} := \overline{\bigcup_{{\cal O} \subset {\mathbb R}^{1+\nu} }{\cal R} ({\cal O})} .
\label{16e}
\end{equation}
The $C^*$-algebra (\ref{16e}) is isomorphic to the $C^*$-inductive limit of
the local von Neumann algebra~${\cal R}_{\scriptscriptstyle AW}({\cal O})$ generated by the Weyl operators 
in the Araki-Woods representation~\cite{12,26,27}. The thermal states $\omega_{\beta}$, $ \beta >0$,  
defined by (\ref{16d}) satisfy a relativistic generalisation of the KMS condition (see \cite{12} and references given there).

We now turn to the question of whether assumptions {\bf A1-A5} are satisfied for the above-mentioned examples.
For~ {\bf A4}, {\bf A5} (except purity), see \cite{22} and~\cite{23}. For the unicity of the vacuum (the purity in {\bf A4}) see \cite{24} and references given there. Property {\bf A5}   for the thermal state were proved in \cite{18}. The replacement of ${\cal R}({\cal O})$ by a weakly dense subalgebra such as the one of the form~(\ref{sakai-2}) does not alter the validity of the above-mentioned results, see the remarks after the definition pg.~399 of~\cite{22}. Thus {\bf A5} may also be assumed to hold for these examples, in particular the replacement of (\ref{16e}) by~(\ref{13}).
  Finally, for the factoriality property in {\bf A4}, we may decompose the thermal state into factor states through the primary decomposition~\cite[Vol.~2, Theorem~5.3.30, pg.~116]{1}, and pick any one of the latter as our state. For the thermal~$P(\phi)_2$ theory it is expected that the KMS state is unique and thus a factor state, but a proof is still missing. It would be interesting to know whether there exist non-equilibrium stationary states (NESS) which satisfy the properties {\bf A1-A5}, but so far we are not aware of any rigorously constructed NESS for interacting relativistic quantum field theories.

\section{The main theorem}
\label{Sec3}

We want to inherit clustering properties of the time translation from those of space translation with the help of a smearing effect. First we specify the properties of the smearing functions. Let $f, g$ be  $C^{\infty}$-functions of compact support, such that
\begin{equation}
\int_{-\infty}^{\infty} {\rm d} x \; f(x) = 1, \qquad f(x) \ge 0, 
\end{equation}
and
\begin{equation}
{\rm supp} \, f \in
[ -( a-\delta) , a-\delta ],
\label{18a}
\end{equation}
where $\delta<a$. Set
$\alpha_{t} := t^{-1/2-\epsilon},
$
with
$0 < \epsilon < 1/2$,
and define
\begin{equation}
g_{t}(v) :=  \frac{1}{\alpha_{t}}g \left(\frac{v}{\alpha_{t}} \right)
\qquad
\forall v\in {\mathbb R}\, .
\label{18d}
\end{equation}
Clearly $g_{t}(v) \to \delta (v)$ as $t \to \infty$, thus $\{ g_t \}_{t >0}$ is an
approximation of the Dirac delta function. For later usage we define also
an approximation  $\{\hat f_s \}_{s>0}$ of the Dirac delta function,
whose convergence is less rapid:
\begin{equation}
\hat f_{s}(v) :=  \frac{1}{ \hat \alpha_{s}}g \left(\frac{v}{\hat \alpha_{s}} \right), 
\qquad
v\in {\mathbb R}\, ,
\label{18d2}
\end{equation}
with $\hat \alpha_{s} := s^{-1/4-\epsilon/4}$,
$0 < \epsilon < 1/4$.

\bigskip

Our main theorem may now be stated:

\begin{theorem}
\label{Th1}
Let a relativistic quantum field theory satisfy the Assumptions {\bf A1-A3} of
Section~\ref{Sec2} and let $\omega $ be a state satisfying the Assumptions {\bf A4-A5}.
Then
\begin{itemize}
\item [(i)] $ \omega _v  \equiv \omega \circ \lambda _v $ is  an extremal space translation invariant and time invariant factor state;

\item[(ii)] $\omega _f :=\int {\rm d} v \, f(v)\omega _v$ is not a factor state. Its centre contains space- and time-translation invariant elements $B_f$ satisfying the following three properties:
\begin{itemize}
\item [(ii.a)] $ \omega _f(AB_f)=\omega _f(B_f A) $;
\item [(ii.b)] $ \lim _{x\rightarrow \infty }\omega _f \bigl(A  \sigma _x (B-B_f) \bigr)=0$;
\item [(ii.c)]  $ \lim _{t\rightarrow \infty }\omega _f \bigl(A  \tau _t (B-B_f) \bigr)=0 $.
\end{itemize}
\end{itemize}
\end{theorem}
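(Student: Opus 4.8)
The plan is to establish (i) and (ii) separately, using the boost relation~(\ref{11}) to trade time translations for space translations and the two clustering inputs~(\ref{H28})--(\ref{H29}) as the analytic engine.

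For part~(i) I would proceed in three steps. First, $\omega_v=\omega\circ\lambda_v$ is a factor state: since $\lambda_v\in{\rm Aut}({\cal A})$ is surjective, the GNS representation of $\omega\circ\lambda_v$ is unitarily equivalent to $\pi_\omega\circ\lambda_v$, whose weak closure is $\pi_\omega(\lambda_v({\cal A}))''=\pi_\omega({\cal A})''={\cal R}_\omega$, so the center is unchanged and remains trivial. Second, invariance under space and time translations follows by feeding $a=(t,\vec 0)$ and $a=(0,\vec x)$ into~(\ref{11}): for instance $\omega_v\circ\tau_t=\omega\circ\lambda_v\circ\tau_t=\omega\circ\tau_{t\cosh v}\circ\sigma_{-t\sinh v}\circ\lambda_v$, and the full space--time translation invariance in~{\bf A5} collapses the translation to give $\omega_v\circ\tau_t=\omega_v$; the space-translation case is identical. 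Third, extremal space-translation invariance follows from factoriality together with the norm space-asymptotic abelianness~(\ref{H28}): in an asymptotically abelian system every space-translation invariant element of ${\cal R}_{\omega_v}$ lies in its center, which is trivial because $\omega_v$ is a factor, so the only invariant observables are multiples of the identity, i.e.\ $\omega_v$ is ergodic, hence extremal space-translation invariant (cf.\ \cite[Example~4.3.24]{1}).

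For part~(ii), the non-factoriality rests on the fact that the $\omega_v$ are mutually disjoint for distinct $v$ --- boosting the state changes its rest frame, so $\omega_v$ and $\omega_{v'}$ are not quasi-equivalent --- whence the central decomposition realises ${\cal H}_{\omega_f}=\int^{\oplus}{\cal H}_{\omega_v}\,f(v)\,{\rm d}v$ with a non-trivial center containing the diagonal algebra of bounded functions of $v$; in particular $\omega_f$ is not a factor. I would then define $B_f$ to be the central element acting on the fibre ${\cal H}_{\omega_v}$ as the scalar $\omega_v(B)$ --- equivalently, the weak limit of the spatial averages of $\sigma_{\vec x}(B)$, which exists because each $\omega_v$ clusters in space. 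By construction $B_f$ is central and both space- and time-translation invariant (translations act fibrewise and preserve each $\omega_v$). Property~(ii.a) is then just the centrality of $B_f$, and property~(ii.b) follows from~(\ref{H29}): since $\sigma_{\vec x}(B_f)=B_f$, dominated convergence gives $\omega_f(A\sigma_{\vec x}B)\to\int{\rm d}v\,f(v)\,\omega_v(A)\omega_v(B)=\omega_f(AB_f)$, which is~(ii.b).

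The main obstacle is~(ii.c), the time-clustering, and this is where the boost mechanism advertised in the introduction must do the work. One should not try to prove $\omega_v(A\tau_t B)\to\omega_v(A)\omega_v(B)$ fibrewise; instead one works with the $v$-average and converts the time translation into a space translation via $\lambda_v\circ\tau_t=\tau_{t\cosh v}\circ\sigma_{-t\sinh v}\circ\lambda_v$, rewriting $\omega_f(A\tau_t B)$ as an integral over $v$ of $\omega$ evaluated on $\lambda_v(A)$ against a space--time translate of $\lambda_v(B)$. The key device is to smear the boost parameter with the shrinking approximate identity $g_t$ of width $\alpha_t=t^{-1/2-\epsilon}$: over that window the induced spatial displacement $-t\sinh v$ sweeps a range of order $t\,\alpha_t=t^{1/2-\epsilon}\to\infty$, so the space-clustering~(\ref{H29}) applies, while the companion time displacement is brought under control by local commutativity~({\bf A1}) together with the norm-continuity and smoothness supplied by~{\bf A2}--{\bf A3} and~(\ref{H28}). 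The delicate heart of the argument is the balancing of the two scales: the boost window must shrink fast enough that replacing $\tau_t B$ by its $g_t$-smeared boost costs a vanishing error, yet slowly enough that the spatial spread still diverges --- precisely the role played by the tuned exponents in $\alpha_t$ and $\hat\alpha_s$. Once the $v$-averaged limit is identified with $\int{\rm d}v\,f(v)\,\omega_v(A)\omega_v(B)=\omega_f(AB_f)$, property~(ii.c) follows from $\tau_t(B_f)=B_f$.
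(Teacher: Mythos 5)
Your parts (i), (ii.a) and (ii.b) are essentially sound, though partly by a different route: you derive extremal space-translation invariance of $\omega_v$ from factoriality plus norm asymptotic abelianness (\ref{H28}), whereas the paper pulls extremality back from the assumed extremality of $\omega$ through the automorphism $\lambda_v$; both work, and yours is arguably more self-contained. Your description of $B_f$ as the central element acting as the scalar $\omega_v(B)$ on the fibre over $v$ is exactly the paper's weak definition (\ref{neu38}), and your proof of (ii.b) (clustering of each $\omega_v$ plus dominated convergence in $v$) matches the paper's. One caveat: your claim that the $\omega_v$ are mutually disjoint because "boosting changes the rest frame" is unproved and is false in general --- for a Lorentz-invariant $\omega$ (the vacuum) all $\omega_v$ coincide; fortunately disjointness is never needed, only the existence of the central element $B_f$, which the weak definition supplies. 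Also, $\hat f_s$ and $\hat \alpha_s$ play no role in Theorem~\ref{Th1}; they enter only in the later scaled-limit statement (\ref{H31b}).

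The genuine gap is in (ii.c), exactly at the point you call the "delicate heart". After smearing with $g_t$, using (\ref{H23}), the smoothness {\bf A2}--{\bf A3}, and the $\xi$-invariance of $\omega_v$, one is reduced (cf.\ (\ref{H25a})) to estimating $\int {\rm d}v_1 \, g_t(v_1)\, \omega_v \bigl( \tau_{-t}(A)\bigl(\sigma_{t\sinh v_1}(B)-\omega_v(B)\,\unit\bigr)\bigr)$. Your proposal now asserts that since the spatial displacement $t\sinh v_1$ sweeps a range of order $t\alpha_t \to \infty$, "the space-clustering (\ref{H29}) applies". This is a non sequitur: clustering (\ref{H29b}) is a statement about a \emph{fixed} pair of observables, while here the companion observable is $\tau_{-t}(A)$, which moves with $t$. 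Nothing in {\bf A1}--{\bf A5} gives clustering uniformly over the orbit $\{\tau_{-t}(A)\}_{t>0}$ --- such uniformity is essentially the time-mixing one is trying to prove --- and local commutativity cannot decouple the pair, because the separation $(t\cosh v_1,\, t\sinh v_1)$ is time-like, not space-like. (Your picture also fails near $v_1=0$, where the displacement is not large at all.) The paper's resolution, absent from your sketch, is the Cauchy--Schwarz step (\ref{H26}): writing $X = \int {\rm d}v_1\, g_t(v_1)\bigl(\sigma_{t \sinh v_1}(B) - \omega_v(B)\,\unit\bigr)$, one bounds $|\omega_v(\tau_{-t}(A)X)| \le \|A\| \, \omega_v(X^*X)^{1/2}$, which removes $\tau_{-t}(A)$ entirely and leaves a \emph{double} integral over $(v_1,v_2)$ involving only space translates of $B^*$ and $B$. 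By space-translation invariance of $\omega_v$ the integrand depends only on the relative displacement $t(\sinh v_2-\sinh v_1)\approx t\alpha_t (w_2-w_1)$, which diverges for $w_1\neq w_2$; clustering then applies pointwise off the diagonal (a Lebesgue-null set), and two applications of dominated convergence close the argument. Without this device --- or some substitute hypothesis of uniform clustering --- your proof of (ii.c) does not go through.
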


\begin{proof}
\begin{itemize}
\item[(i)] by assumption $\omega $ is invariant under space-time  translations:
\begin{equation}
\omega \circ\xi_{ (t, \vec x) } = \omega \qquad \forall (t, \vec x ) \in {\mathbb R}^{1+\nu}.
\end{equation}
From (\ref{11})
we conclude that
\begin{equation}
\label{H23modified}
\lambda_{v} \circ \xi_{a} =\xi_{L(v)a } \circ \lambda_{v}.
\end{equation}
Now consider the state $\omega _v = \omega \circ \lambda _v$. Clearly
\begin{equation}
\begin{array}{rl}
\omega _v \circ \xi_{(t,\vec x)} & =
\omega \circ \lambda _v \circ \xi_{(t,\vec x)}
\\ [4mm]
&
= \omega  \circ \xi_{L(v)(t,\vec x)} \circ \lambda _v
= \omega   \circ \lambda _v
\\ [4mm]
&
= \omega_v \qquad \forall (t, \vec x ) \in {\mathbb R}^{1+\nu}.
\end{array}
\end{equation}
This shows that $\omega_v$ is also space translation invariant and time invariant.
Now assume that $\omega _v$ allows a decomposition into space-translation invariant states.
Then we could use the
group property to derive a  decomposition of $\omega$ into space-translation invariant states,
which is not allowed. Thus $\omega _v$ is extremal space translation invariant.
As $\lambda _v$ is an automorphism of ${\cal A}$, the state $\omega _v$ is a factor state, just like $\omega$.

\item[(ii)] Smearing out the state $\omega$ (i.e., forming a convex combination)
will lead to a non-trivial centre, thus we expect that $\omega _f$ fails to be a factor state. Indeed, given some $ B \in {\cal A}$
we can weakly define a non-trivial element
$B_f$, which lies in the centre:
\begin{equation}
\label{neu38}\omega_f (AB_f C) :=  \int {\rm d}v \; f(v)\, \omega _v (AC) \omega _v (B)
\end{equation}
for all $A, B, C \in {\cal A}$.
Clearly (\ref{neu38}) implies $\omega_f (AB_f C) = \omega_f (AC B_f ) $ and $\omega_f (B_f C) = \omega_f (C B_f ) $.
Next we show that (ii.b) holds:
\begin{eqnarray}
\label{new43}
& \lim _{x\rightarrow \infty }\omega _f \bigl(A  \sigma _x (B-B_f) \bigr) = \qquad \qquad 
\\ [1mm]
& \qquad \qquad =
\lim _{x\rightarrow \infty }\int {\rm d}v  f(v)  \omega _v \bigl(A\bigr(\sigma _x (B)-\omega _v(B) \unit \bigr)\bigr)  .
\nonumber 
\end{eqnarray}
We have made use of the fact that $\omega _v(\sigma _x (B)) = \omega _v(B)$ for all $x \in {\mathbb R}^s$.
The r.h.s.~in~(\ref{new43}) vanishes, as  $w$-$\lim _{x\rightarrow \infty }  \sigma _x (B) = \omega _v(B)  \unit$.

Next we prove (ii.c). We proceed in several steps.

\begin{itemize}
\item [(a)] Different from \cite{8} the initial state will, except for the ground state or for the tracial state (provided it exists) not be invariant under Lorentz boosts. In order to use~(\ref{9b}) and shift it between the operators $A$ and $B$ we have to control that the effect of the state is negligible. We define
\begin{equation}
\label{H19a}
h_t(v):=\int {\rm d}v_1 \; f(v-v_1) \, g_t (v_1),
\end{equation}
with $g_{t}(v) $ as described in (\ref{18d}).
It follows from the fact that the width of the ${\rm supp \ } g_t$ is proportional to $t^{-1/2-\epsilon}$
that
\begin{equation}
\label{H20}
\bigl| h_t(v)-f(v) \bigr| \leq \, c \, \sup \left| \; \frac{{\rm d} }{{\rm d}v} \, f \; \right|
\cdot t^{-1/2-\epsilon} .
\end{equation}
This allows to approximate (ii.c) by
\begin{eqnarray}
\label{H21}
&\int {\rm d}v \; f(v)\int dv_1 \; g_t (v_1)  \times \qquad \qquad 
\\ [1mm]
&\qquad
\qquad 
\qquad 
\times 
\omega _v
\Bigl( \lambda_{v_1} (A) \bigl(\lambda_{v_1} \circ \tau _t(B)-\omega _v(B) \unit \bigr) \Bigr).
\nonumber 
\end{eqnarray}
Note that we have shifted the integration over $v$ by $v_1$.

\item [(b)] Let us now concentrate for a moment on the term
\begin{equation}
\label{H21a}
\lambda_{v_1} (A) \bigl(\lambda_{v_1} \circ \tau _t(B)-\omega _v(B) \unit \bigr) .
\end{equation}
From (\ref{9b}) we conclude that
\begin{equation}
\qquad
\qquad
 \lambda_{v_1} \circ \tau_{t} (B)= \tau_{ t \cosh v_1} \circ \sigma_{ t \sinh v_1} \circ \lambda_{v_1} (B).
\label{H23}
\end{equation}
Thus (\ref{H21a}) equals
$\lambda_{v_1} (A) \bigl(\tau_{ t \cosh v_1} \circ \sigma_{ t \sinh v_1}
\circ \lambda_{v_1} (B)
-\omega _v(B) \unit \bigr) $.
Since $g_t$ has shrinking compact support for $t \to \infty$, there
exist constants  $c_A$ and $c_B$, which may depend on~$A$ and $B$, respectively, such that
\begin{equation}
\label{H22}
\| \lambda_{v_1}(A)-A \|   \leq
c_A \, t^{-\epsilon -1/2},
\end{equation}
\begin{equation}
\| \lambda_{v_1}(B)-B \|   \leq
c_B \, t^{-\epsilon -1/2} ,
\end{equation}
for all $v_1 \in {\rm supp}  \, g_t$.
Thus (\ref{H21a})  can be approximated in norm:
\begin{widetext}
\begin{equation}
\label{H23-2}
\begin{array}{rl}
& \Bigl\| \lambda_{v_1} (A) \bigl(\tau_{ t \cosh v_1} \circ \sigma_{ t \sinh v_1}
\circ \lambda_{v_1} (B)
-\omega _v(B) \unit \bigr)
\\ [4mm]
&\qquad \qquad \qquad \qquad
-
A  \bigl(\tau_{ t \cosh v_1} \circ \sigma_{ t \sinh v_1}
(B)
-\omega _v(B) \unit \bigr) \Bigr\|
\\ [4mm]
& \qquad \le \Bigl\| (\lambda_{v_1} (A) - A)   \bigl(\tau_{ t \cosh v_1} \circ \sigma_{ t \sinh v_1}
\circ \lambda_{v_1} (B)
-\omega _v(B) \unit \bigr) \Bigr\|
\\ [4mm]
&\qquad \qquad + \Bigr\| A    \Bigl(\tau_{ t \cosh v_1} \circ \sigma_{ t \sinh v_1}
\circ \lambda_{v_1} (B) - \tau_{ t \cosh v_1} \circ \sigma_{ t \sinh v_1}
(B) \Bigr) \Bigr\|
\\ [4mm]
&\qquad  \le c_A \, t^{-\epsilon -1/2} \cdot 2 \| B \| + \| A \| \cdot c_B \, t^{-\epsilon -1/2} .
\end{array}
\end{equation}
\item [(c)] Using the time invariance of $\omega_v$ we find that
\begin{equation}
\label{H23a}
\begin{array}{rl}
& \left| \omega _v
\bigl( \lambda_{v_1} (A) \bigl(\lambda_{v_1} \circ \tau _t(B)-\omega _v(B) \unit \bigr) \bigr) -
\omega _v
\bigl( \tau_{ -t \cosh v_1} (A)( \sigma_{ t \sinh v_1} (B) -\omega _v(B) \unit )  \bigr)
\right|
\\ [4mm]
&\qquad  \le c_A \, t^{-\epsilon -1/2} \cdot 2 \| B \| + \| A \| \cdot c_B \, t^{-\epsilon -1/2} .
\end{array}
\end{equation}

\item [(d)] So far the considerations have been the same for Galilei invariant time evolutions as for relativistic time evolutions. In order that the $v_1$ integration is only connected with the space translation we need an additional estimate:
\begin{equation}
\label{H25}
\sup_{v_1 \in {\rm supp} \, g_t} \|  \tau_{t}(A) - \tau_{ t \cosh v_1}(A) \|
\leq c'_A\, | t| ^{-2\epsilon},
\end{equation}
with a constant $c'_A$ which may depend on~$A$.
This follows from expanding $\cosh v_1$ in a power series and taking the
support properties of $g_t$ into account.
Thus for~$t >0$
\begin{equation}
\label{H25a}
\begin{array}{rl}
& \left| \omega _v
\bigl( \lambda_{v_1} (A) \bigl(\lambda_{v_1} \circ \tau _t(B)-\omega _v(B) \unit \bigr) \bigr) -
\omega _v
\bigl( \tau_{ -t } (A)( \sigma_{ t \sinh v_1} (B) -\omega _v(B) \unit )  \bigr)
\right|
\\ [4mm]
&\qquad  \le c_A \, t^{-\epsilon -1/2} \cdot 2 \| B \| + \| A \| \cdot c_B \, t^{-\epsilon -1/2} + c'_A\, t^{-2\epsilon}
\cdot 2 \| B \| .
\end{array}
\end{equation}

We can now replace (ii.c) by estimating
\begin{eqnarray}
\label{H26}
& \left|\int {\rm d}v_1 \; g_t (v_1) \, \omega _v \left(\tau_{-t}(A) \bigl(\sigma_{t\sinh v_1}(B)-\omega _v(B) \unit \bigr) \right) \right|
\nonumber \\ [4mm]
& \qquad  \leq
\| A \|  \; \Bigl\{ \int {\rm d}v_1 \, {\rm d}v_2 \;
g_t (v_1)
g_t (v_2) 
\omega _v
\Bigl( \bigl( \sigma_{t \sinh v_1} (B^*) - \omega _v (B) \unit  \bigr)
\bigl(\sigma_{ t \sinh v_2} (B) -\omega _v(B^*) \unit \bigr) \Bigr) \Bigr\}^{1/2}.
\end{eqnarray}
\end{widetext}
\item [(e)] Returning to (ii.c) we have still to integrate over~$v$. As $\omega _v$ is an extremal space translation invariant primary state, $\omega _v$ is clustering (\cite{1}, Example 4.3.24):
\begin{equation}
\label{H29b}
\omega_v \bigl(A\sigma_{\vec{x}}(B)\bigr) - \omega_v (A)\omega_v (B) \to 0  
\end{equation}
as $|\vec{x}|\to\infty$. With  (\ref{18a}), (\ref{18d}) the integral $ \int {\rm d} v_{1}...$ in (\ref{H21}) has an upper bound
$||B||(||A||+1)$ which is independent of $t$ and~$v$. For all  $v$ the $v_{1}$ integral converges to zero as
$t\to \infty$. This follows from (\ref{H26}) and (\ref{H29b}) and Lebesgue dominated convergence: one makes in the double integral on the r.h.s.~of~(\ref{H26}) the change to variables $w_{1}=v_{1}/\alpha_{t}$ and $w_{2}=v_{2}/\alpha_{t}$ suggested by (39). Using space translation invariance of $\omega_{v}$ and the choice of $\alpha_{t}$, the integrand in the new variables is seen to tend to zero by (\ref{H29b}) for all $(w_{1},w_{2})$ except along the diagonal $w_{1}=w_{2}$ (thus a set of zero Lebesgue measure in $\mathbf{R}^2$), and is, uniformly in $t$, bounded by $g(w_{1})g(w_{2})(2||B||)^2$.
Since~$f$ has compact support and (\ref{18a}) is satisfied,  a second
application of the Lebesgue dominated convergence theorem finally proves the result.
\end{itemize}
Collecting all estimates proves the theorem. \qed
\end{itemize}
\end{proof}

\bigskip
We now want to draw conclusions for the state $\omega $ we started with:

\begin{theorem}
\label{Th2}
In QFT a primary state that is extremal space invariant and also time invariant is also extremal time invariant.
\end{theorem}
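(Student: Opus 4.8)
The plan is to reduce the assertion to the one-dimensionality of $\mathrm{Ker}\, L_\omega$ and then to harvest this from the method of Theorem~\ref{Th1}, specialised to the factor state $\omega$ itself. Recall first the elementary half of the Koopman--von~Neumann characterisation stated in the Introduction: a $\tau$-invariant $\omega$ is extremal time invariant as soon as $\mathrm{Ker}\, L_\omega = \mathbb{C}\,\Omega_\omega$. Indeed, any time-invariant state dominated by $\omega$ has the form $A \mapsto (\Omega_\omega, T'\pi_\omega(A)\Omega_\omega)$ for some positive $T' \in \pi_\omega({\cal A})'$ commuting with $U_\omega^{t}=\exp(itL_\omega)$; since $L_\omega\Omega_\omega = 0$ and $T'$ commutes with $L_\omega$, the vector $T'\Omega_\omega$ lies in $\mathrm{Ker}\, L_\omega$, so if the latter is one-dimensional then $T'\Omega_\omega \in \mathbb{C}\,\Omega_\omega$ and the dominated state is a multiple of $\omega$. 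Equivalently, by von~Neumann's mean ergodic theorem, one-dimensionality of $\mathrm{Ker}\, L_\omega$ coincides with the Ces\`aro clustering~(\ref{5}), so it suffices to establish~(\ref{5}) for $\omega$.

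First I would observe that $\omega$ is in fact mixing, which gives~(\ref{5}) a fortiori. The key point is that the construction of the central element $B_f$ in the proof of Theorem~\ref{Th1} degenerates for a factor state. Because $\omega$ is primary, the centre of $\pi_\omega({\cal A})''$ is trivial, and the defining relation~(\ref{neu38}), read for $\omega$ rather than for the convex combination $\omega_f$, becomes $\omega(AC)\,\omega(B) = \omega\bigl(A\,(\omega(B)\unit)\,C\bigr)$; that is, the would-be central element is the scalar $\omega(B)\unit$. Property (ii.c) therefore reads, for $\omega$,
\[
\lim_{t\to\infty}\bigl[\omega(A\tau_t(B)) - \omega(A)\omega(B)\bigr] = 0 \qquad \forall\, A, B \in {\cal A},
\]
which is precisely the mixing condition~(\ref{6}).

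To see that this limit genuinely holds, I would rerun the estimates (a)--(e) of the proof of Theorem~\ref{Th1} with $\omega$ in place of each $\omega_v$, now dropping the outer average $\int {\rm d}v\, f(v)$ and retaining only the fine smearing $g_t$ that activates the boost identity~(\ref{H23}). The norm estimates~(\ref{H20})--(\ref{H25a}) do not involve the state and carry over verbatim, reducing matters to the $v_1$-integral in~(\ref{H26}); its decay then follows, exactly as in step~(e), from the clustering property~(\ref{H29}) of the extremal space-invariant factor state $\omega$, via the change of variables $w_i = v_1/\alpha_t$ and dominated convergence. With mixing established,~(\ref{5}) holds, hence $\mathrm{Ker}\, L_\omega = \mathbb{C}\,\Omega_\omega$, and by the first paragraph $\omega$ is extremal time invariant.

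The main obstacle is this last dynamical step: verifying that the boost-induced error terms are controlled uniformly for the single, \emph{non}-boost-invariant state $\omega$. This is precisely where relativistic local commutativity and the time-dependent scale $\alpha_t = t^{-1/2-\epsilon}$ enter, since time translations are not asymptotically abelian and no soft, purely order-theoretic passage from extremal space invariance to extremal time invariance is available (the time-invariant states need not form a simplex, so one cannot simply invoke uniqueness of an ergodic decomposition). All the requisite control is, however, already furnished by the estimates of Theorem~\ref{Th1}, so once their state-independence is noted the conclusion follows.
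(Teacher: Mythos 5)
Your reduction in the first paragraph --- extremal time invariance follows once ${\rm Ker}\, L_\omega=\mathbb{C}\,\Omega_\omega$, equivalently once the ergodic clustering (\ref{5}) holds --- is correct. The genuine gap is in the dynamical step: you cannot ``rerun the estimates (a)--(e) of Theorem \ref{Th1} with $\omega$ in place of each $\omega_v$, dropping the outer average $\int {\rm d}v\, f(v)$.'' In that proof the outer average is not a spectator; it is the only mechanism by which the boost $\lambda_{v_1}$ gets inserted next to $A$ and $\tau_t(B)$ at all. The passage to (\ref{H21}) works by replacing $f$ with the convolution $h_t=f*g_t$ of (\ref{H19a}), with error (\ref{H20}) proportional to $\sup|f'|\cdot t^{-1/2-\epsilon}$, and then shifting the integration variable $v\mapsto v+v_1$, which uses $\omega_{v+v_1}=\omega_v\circ\lambda_{v_1}$. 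For the single state $\omega$ (formally $f=\delta_0$) both ingredients collapse: $\sup|f'|$ diverges as the smearing concentrates (the paper's own estimate (\ref{62}) shows the degradation already for the slowly shrinking $\hat f_t$, which is why the authors say they ``are unable to control the limit $a \to 0$''), and the variable shift would require $\omega\circ\lambda_{v_1}\approx\omega$ \emph{uniformly along the time orbit} $\{A\tau_t(B)\}_{t>0}$. That is exactly what fails for a non-boost-invariant (e.g.\ thermal) state: assumption {\bf A3} gives $\|\lambda_{v_1}(C)-C\|\to 0$ only for fixed $C$, whereas here $C=A\tau_t(B)$ runs off to infinity and, by (\ref{H23}), $\lambda_{v_1}(\tau_t(B))$ differs from $\tau_t(B)$ by a spatial translation of length $t\sinh v_1\sim t^{1/2-\epsilon}\to\infty$ for $v_1\in{\rm supp}\, g_t$; this large shift is the whole point of the construction, and only the $v$-average can absorb it. Note also that your intermediate claim --- mixing of $\omega$ itself, i.e.\ (\ref{6}) --- is precisely what the paper states it cannot establish (see the remark immediately after Theorem \ref{Th2}, and the weaker diagonal substitute (\ref{H31b})); an argument that delivers it by merely ``noting state-independence'' of the estimates proves too much, which is a reliable sign that a step has failed.

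The paper's actual proof is of an entirely different, decomposition-theoretic kind and never asserts any clustering for $\omega$ alone. It works in the representation of the smeared state $\omega_f$: by Theorem \ref{Th1} (ii.c), time translates converge weakly to central elements; by (ii.b), the maximal abelian algebra governing the decomposition into extremal time-invariant states is contained in the centre and coincides with the algebra of space-translation-invariant elements; hence the decomposition of $\omega_f$ into extremal time-invariant states coincides with its decomposition into extremal space-translation-invariant states. Since $\omega$ is by hypothesis extremal space-translation invariant and occurs in that decomposition, it is automatically extremal time invariant. Any repair of your argument must keep the $v$-average throughout, as the paper does, and extract the statement about $\omega$ from the coincidence of the two ergodic decompositions of $\omega_f$, rather than from a mixing property of $\omega$ itself.
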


\begin{proof}
We start with such a state and smear it to obtain~$\omega _f$. By Theorem \ref{Th1}
(ii.c) a time translated operator converges
weakly to an element in the centre. The decomposition into extremal time-invariant states corresponds to a maximal abelian subalgebra within the commutant, which contains all time invariant elements. This algebra is contained in the centre, and coincides with the algebra of space-translation invariant elements. This follows from Theorem \ref{Th1}
(ii.b). Therefore also the decomposition into extremal time invariant states coincides with the decomposition into extremal space translation invariant states \cite{28}.
\qed
\end{proof}

Unfortunately Theorem \ref{Th2} is not strong enough to guarantee the mixing property~(\ref{6}). There is still the possibility that the time invariant operators in the commutante are not weak limits but only invariant means. Notice that in our proof we used however just limits.

\bigskip

Another  possibility to interpret Theorem \ref{Th1} is by varying $f(v)$, in a sense to be made precise below.

\begin{theorem}
\label{Th3}
Let $\omega$ be a state satisfying A1-A5. Then for any state $\omega_{f}$ of the form (ii) of Theorem 3.1 (hence space and time tranlation invariant), among which there are some arbitrarily close to $\omega$ in the weak* topology,  the time evolution is weakly asymptotically abelian:
\begin{equation}
\lim_{t\to\infty} \omega_{f}\bigl( A\lbrack\tau_{t} (B) ,C\rbrack D \bigr) = 0
\label{H30}
\end{equation}
for all $A,B,C,D \in \cal{A}$.
\end{theorem}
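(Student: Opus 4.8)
The plan is to reduce the weak asymptotic abelianness (\ref{H30}) to the clustering statement (ii.c) already established in Theorem \ref{Th1}, exploiting the fact that $B_f$ lies in the centre of the enveloping von Neumann algebra and is therefore time-translation invariant. First I would observe that since $B_f$ is a central element, it commutes with every $C \in \mathcal{A}$ in the GNS representation, so $\lbrack \tau_t(B_f), C \rbrack = 0$ for all $t$ (using that $\tau_t(B_f) = B_f$ by the time-invariance asserted in (ii)). This means I may freely subtract $B_f$ inside the commutator: $\omega_f\bigl( A \lbrack \tau_t(B), C \rbrack D \bigr) = \omega_f\bigl( A \lbrack \tau_t(B - B_f), C \rbrack D \bigr)$. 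The task is then to show that the latter vanishes as $t \to \infty$.

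Next I would expand the commutator and handle the two resulting terms,
\begin{equation}
\omega_f\bigl( A\, \tau_t(B-B_f)\, C D \bigr) \quad\text{and}\quad \omega_f\bigl( A\, C\, \tau_t(B-B_f)\, D \bigr),
\label{H31}
\end{equation}
each of which has the shape controlled by (ii.c) of Theorem \ref{Th1}, where the role of the ``$A$'' in (ii.c) is played by $ACD$ or by a suitable reordering. For the first term this is immediate: identifying the test observable $A' := A$ and the observable $C' := CD$ that sits to the right, I would rewrite $\omega_f(A\,\tau_t(B-B_f)\,CD)$ so that (ii.c) applies with the bracketed quantity $\tau_t(B - B_f)$ clustering to zero against the surrounding observables. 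The second term requires me first to move $\tau_t(B - B_f)$ past $C$; since $B$ is localised and $\tau_t(B)$ is (asymptotically) spacelike-separated from $C$ only after the boost-smearing argument, I would instead invoke local commutativity together with the same boost-rescaling mechanism that drove the proof of (ii.c)—namely that under the time-dependent scaling $\alpha_t$ the boosted, time-translated $B$ is pushed to large spatial separation, where commutators vanish in norm by (\ref{H28}).

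The main obstacle, as I see it, is precisely this second term: unlike the one-sided expression in (ii.c), here $\tau_t(B-B_f)$ is sandwiched between $C$ and $D$ rather than sitting at the end, so I cannot directly quote (ii.c). I would address this by repeating the smearing construction of Theorem \ref{Th1}—replacing $\tau_t$ by $\lambda_{v_1}\circ\tau_t$ smeared against $g_t$, using (\ref{H23}) to convert the time translation into a large space translation $\sigma_{t\sinh v_1}$, and then applying (\ref{H28}) to commute the resulting nearly-spacelike operator through $C$ at a cost that vanishes as $t\to\infty$. After this commutation the expression again takes the one-sided form to which (ii.c) and the clustering (\ref{H29b}) apply. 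I would then integrate over $v$ via the two applications of the Lebesgue dominated convergence theorem exactly as in step (e) of the proof of Theorem \ref{Th1}. The statement about states arbitrarily close to $\omega$ follows from estimate (\ref{H20}): as the support of $f$ concentrates (taking $a, \delta \to 0$), $\omega_f \to \omega$ in the weak* topology, so the family of boost-smeared states realising (\ref{H30}) can be chosen to approximate $\omega$.
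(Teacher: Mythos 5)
Your proposal is correct in substance and follows essentially the same route as the paper: the paper's own proof simply reads (ii.c) of Theorem \ref{Th1} as the statement that $\tau_t(B)$ converges weakly, in the representation $\pi_{\omega_f}$, to the central element $B_f$, so that the commutator with $C$ vanishes weakly; your subtraction of $B_f$ inside the commutator and the expansion into two terms is an explicit version of exactly this. You are in fact more careful than the paper on the one genuine technical point: (ii.c) as stated and proved is one-sided ($\tau_t(B-B_f)$ stands at the right end of the expectation value), whereas (\ref{H30}) requires matrix elements with observables on both sides of $\tau_t(B-B_f)$. Your fix --- repeat the boost-smearing, use (\ref{H23}) to trade the time translation for a large spatial translation, then invoke locality (\ref{H28}) to commute the translated operator past the flanking observables at a cost vanishing in norm, and finish with the clustering (\ref{H29b}) --- is precisely what is needed to justify the paper's terse claim.

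Two slips, both reparable. First, your assertion that the term $\omega_f\bigl(A\,\tau_t(B-B_f)\,CD\bigr)$ is ``immediate'' from (ii.c) is inconsistent with your own (correct) diagnosis of the second term: in \emph{both} terms $\tau_t(B-B_f)$ is sandwiched between observables, so both require the locality argument; fortunately the mechanism you describe applies verbatim to both. Second, the weak* approximation of $\omega$ by $\omega_f$ does not follow from (\ref{H20}), which compares the convolution $h_t$ with $f$ inside the proof of (ii.c) and has nothing to do with the distance between the states. It follows from Assumption \textbf{A3}: choosing ${\rm supp}\, f \subset [-\delta,\delta]$ one has $\bigl|\omega_f(A)-\omega(A)\bigr| \le \sup_{|v|\le\delta}\Vert \lambda_v(A)-A \Vert$, which is small by norm continuity of the boosts; this is how the paper argues the closeness statement.
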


\begin{proof}
Let  $\epsilon>0$ and $A\in\cal{A}_S$ be given. We may then choose $\delta>0$,  depending on $A$ and~$\epsilon$,   such that
for $|v|<\delta$, $\|\lambda_{v}(A)-A\|< \epsilon$. Choose now $f$ smooth with compact support in $ [-\delta,\delta]$
such that
$\int_{-\infty}^{+\infty} {\rm d} v f(v)=1$.
Then
$$
\left| \int {\rm d} v \; f(v)\omega_{v}(A)-\omega(A) \right|<\epsilon \, ,
$$
and thus a proper choice of $f$ makes $\omega_{f}$ arbitrarily close to $\omega$ in the weak* topology. In the representation $\pi_{\omega_{f}}$ corresponding to any  state of the form (ii) of Theorem~\ref{Th1} (close to $\omega$ or not), (iic) of Theorem~\ref{Th1} asserts that~$\tau_{t}(B)$ converges to an element of the centre, for any~$B\in\cal{A}_S$. Thus (\ref{H30}) holds.
\qed
\end{proof}

If, however, we try to obtain a statement that is close to (\ref{6}) for an individual state, we have the possibility to scale the smearing function $f$ and consider the limit of the states
\begin{equation}
\omega _{\hat f_s} :=\int {\rm d} v \, \hat f_s (v)\omega _v
\end{equation}
as $s \to \infty$. Note that $\hat f_s$ was defined in (\ref{18d2}).
Clearly
\begin{equation}
\lim _{s\rightarrow \infty }\omega_{\hat f_s} (A)=\omega (A), \qquad A \in {\cal A} .
\end{equation}
We now would like to estimate
\begin{equation}
\Bigl| \omega_{\hat f_s}(A\tau_t(B))-\omega_{\hat f_s} (A)\omega_{\hat f_s} (B) \Bigr|
\end{equation}
for $s$ and $t$ large.
Setting $s=t$ (i.e., taking the limits $t \to \infty$ and $s \to \infty$ simultaniously)
we find that (\ref{H20}) changes to
\begin{eqnarray}
\label{62}
\bigl| \hat h_t (v) - \hat f _t(v) \bigr|
& \le
\, c \, \sup_v \left| \frac{{\rm d}}{{\rm d}v} \hat f_t (v) \right| \cdot t^{- \epsilon-1/2}
\nonumber
\\
& \le c \,  \sup_v \left|  f'(v) \right| \cdot t^{- \epsilon/4}, 
\end{eqnarray}
for $0 < \epsilon < 1/4$.
The remaining estimates remain unchanged.
Thus
\begin{equation}
\label{H31b}
\lim _{t\rightarrow \infty }\omega _{\hat f_t}
\bigl(A \tau _t(B)\bigr) =\omega (A)\omega (B) ,
\end{equation}
for all $f \in C^\infty ({\mathbb R})$ which satisfy (\ref{18a}), i.e.,
$\int_{-\infty}^{\infty} {\rm d} x \; f(x) = 1$, $ f(x) \ge 0$ and ${\rm supp} \, f \in
[ -( a-\delta) , a-\delta ]$, where $\delta<a$. One can see from (\ref{62}) that
we are unable to control the limit $a \to 0$ with our estimates.

We note that we could still play with the support of the initial function $f$.
As a last remark, if $\omega$ is Lorentz-invariant, we replace $\omega_{v}$ by $\omega$ in (45), which, together with (e), yields the mixing property (6): this is Maison's result \cite{7}.

\begin{remark} The relation between boost and shift was used by D.~Buchholz~\cite{29} to show that for a large class  of states $\{ \hat \omega \}$ (which are normal w.r.t.~the vacuum state) the weak$^*$ limit points  of the nets $\{ \hat \omega \circ \tau_t \}_{t >0}$ as $t \to \infty$ are states, which are
invariant under spatial translations (and thus vacuum states). It should be possible to extend this result to a class of states of the form
\begin{equation}
\hat \omega (.) = \omega (A \; . \; D)
\end{equation}
with $\omega$ satisfying {\bf A4-A5} and $A, D \in {\cal A}$ operators with compact energy support (as defined, e.g., in \cite{30}).
Thus if $\omega$ is the only space-translation invariant state
w.r.t.~the representation~$\pi_\omega$, then the  weak$^*$ limit of all the states in the class $\{\hat \omega\}$ would be
$\omega$ itself.
Whether such a result actually holds and is related to our work has to be further investigated.
\end{remark}

\section{Conclusion, open problems and conjectures}
\label{Sec4}

We have shown that, given  a distinguished state of a quantum field
$\omega$, which satisfies the assumptions {\bf A4-A5} of Section~\ref{Sec2},  there exist  space-time translation invariant states, some of which are arbitrarily close to $\omega$ in the weak* topology, for which the time evolution is weakly asymptotically abelian.

The proof depends on two features: the fact that the Lorentz boost relates space translations and time translations,  and, secondly, locality~(\ref{H28}), which implies the existence of good space-like cluster properties (\ref{H29}). These properties are valid for both ground states and primary (factor) thermal states.

Assumptions {\bf A1-A5} hold for the ground state and thermal state(s) of the $P(\phi)_2$ model, but they are expected to hold for any relativistic quantum field theory, and in this sense we have strengthened the conjecture in~\cite{8} and \cite{31} that the observables 
of a Poincar\'e-invariant theory are a mixing system.

It follows from our results that these states, provided the corresponding GNS vector is separating (such states are called {\em modular states} in the literature),  possess the properties of return to equilibrium
%
%
\begin{equation}
\lim _{t\rightarrow \infty }\omega \bigl( A^*\tau _t(B)A \bigr)=\omega (A^*A)\omega (B)
\end{equation}
and corresponding consequences  which are usually rather hard to prove. This fact may be regarded as a bonus from quantum theory,
%
%
but, more specifically, of quantum field theory. Indeed, for quantum spin systems, the property of weak asymptotic abelianness 
(Theorem~\ref{Th3}),
%
%
is not generally valid \cite{32,33}.

The above reference to quantum field theory  includes Galilean-invariant
theories~\cite{8,10}. One basic difference between fully relativistic quantum field theories and the latter is that only the former display vacuum polarization, which leads to non-Fock representations of the CCR because of Haag's theorem \cite[pg.~55]{19}. There is, however, an advantage of Galilean invariant theories over Poincar\'e-invariant ones: at least for the class considered in~\cite{8,10}, a stronger state-independent mixing property, namely
\begin{equation}
\lim_{t\to\infty} \Vert A\tau_{t} (B) \Vert = \Vert A \Vert \Vert B \Vert
\qquad
\forall A,B \in{\cal A},
\label{40}
\end{equation}
may be proven \cite{8, 9}. Quantum mixing systems in the sense of (\ref{40}) may be shown to be indeterministic and undecidable in a quite precise sense \cite[Lemma L4]{31}, with an interesting application: unpredictability of the symmetry breaking in a phase transition such as the one which presumably occurred after the big-bang \cite{31}.
It is an open problem to prove (\ref{40}) for our class of systems.

On the other hand, relativistic quantum fields and their equilibrium states play an important role in applications in cosmology, in particular in the dark energy
problem~\cite{34}. In cosmology, thermal quantum fields associated to the temperature of background radiation (presently of about 3K) in the hot big-bang model (see, e.g., \cite[pg.~187]{35}) must be of special relevance, which is, yet, to be fully explored (see, however, \cite{36}).

\bigskip

{\em Acknowledgements.}
Two of us (C.J. and W.F.W.) are grateful to Prof.~Jakob Yngvason for an invitation to stay at the Erwin
Schr\"odinger Institut (ESI) during the period June~1--14, 2009, which made this collaboration possible. We all thank Geoffrey Sewell for illuminating discussions.

\end{document}